\begin{document}

\setmarginsrb{1cm}{1cm}{1cm}{.6in}{0pt}{0mm}{0pt}{.4in}

\newtheorem{theo}{Theorem} 
\newtheorem{defi}[theo]{Definition}
\newtheorem{coro}[theo]{Corollary}
\newtheorem{prop}[theo]{Proposition}
\newtheorem{lem}[theo]{Lemma}
\newtheorem{rem}[theo]{Remark}
\newtheorem{conj}[theo]{Conjecture}

\newcommand{\E}{\mathbb{E}}     
\renewcommand{\P}{\mathbb{P}}   
\newcommand{\D}{\displaystyle{}}
\newcommand{\T}{\textstyle{}}

\newcommand{\bE}{{\mathbb E}}
\newcommand{\bJ}{{\mathbb J}}
\newcommand{\bN}{{\mathbb N}}
\newcommand{\bP}{{\mathbb P}}
\newcommand{\bR}{{\mathbb R}}

\newcommand{\cA}{{ \mathcal A}}
\newcommand{\cB}{{ \mathcal B}}
\newcommand{\cC}{{\mathcal C}}
\newcommand{\cH}{{ \mathcal H}}
\newcommand{\cL}{{\mathcal L}}
\newcommand{\cN}{{ \mathcal N}}
\newcommand{\cP}{{\mathcal P}}
\newcommand{\cQ}{{ \mathcal Q}}
\newcommand{\cS}{{\mathcal S}}
\newcommand{\cT}{{ \mathcal T}}
\newcommand{\cV}{{ \mathcal V}}
\newcommand{\sA}{{\mathscr A}}

\newcommand{\sB}{{\mathscr B}}
\newcommand{\sC}{{\mathscr C}}
\newcommand{\sL}{{\mathscr L}}
\newcommand{\sS}{{\mathscr S}}
\newcommand{\sT}{{\mathscr G}}

\newcommand{\tanja}{\textcolor{black}}
\newcommand{\james}{\textcolor{black}}


\title{A polynomial time algorithm for calculating the probability of a ranked gene tree given a species tree}
\author{Tanja Stadler$^1$ \& James H.~Degnan$^{2,3}$\\ \noindent $^1$Institute of Integrative Biology
Universit\"atsstrasse 16,
8092, Z\"urich, Switzerland\\
 $^2$Dept.~of Mathematics and Statistics, Private Bag
4800, University of Canterbury\\
 Christchurch 8140 New Zealand\\ $^3$National Institute of Mathematical and Biological Synthesis, Knoxville, Tennessee, USA}

\maketitle


%




\begin{abstract}
In this paper, we provide a polynomial time algorithm to calculate the probability of a {\it ranked} gene tree topology for a given species tree, where a ranked tree topology is a tree topology with the internal vertices being ordered. The probability of a gene tree topology can thus be calculated in polynomial time if the number of orderings of the internal vertices is a polynomial number. However,  the complexity of calculating the probability of a gene tree topology with an exponential number of rankings for a given species tree remains unknown.
\end{abstract}

\section{Introduction}
Phylogenetic reconstruction methods aim to infer the species phylogeny which gave rise to a group of extant species. Typically, this species phylogeny is obtained based on genetic data from representative individuals of each extant species.
The ancestries of genes at different loci form gene trees which do not necessarily have the same topology as the species tree. 
Gene tree topologies and species tree topologies might be different due to such phenomena as incomplete lineage sorting, gene duplication, recombination within gene loci, and horizontal gene transfer \cite{DegnanAndRosenberg09}.  In this paper, we focus on incomplete lineage sorting as the mechanism for incongruence of gene tree and species tree topologies, in which two gene lineages do not coalesce in the most recent population ancestral to the individuals from which the genes were sampled.  As an example, the lineages sampled from species $A$ and $B$ in Figure 1b do not coalesce until the population ancestral to species $A$, $B$, and $C$, thus allowing the $B$ and $C$ lineages in the gene tree to have a more recent common ancestor than lineages $A$ and $B$.

\begin{figure*}[!t]
\setlength{\unitlength}{0.05 mm}%
  \begin{picture}(1372.0, 3204.0)(0,0)
  \put(20,1500){\includegraphics[width=.49\textwidth]{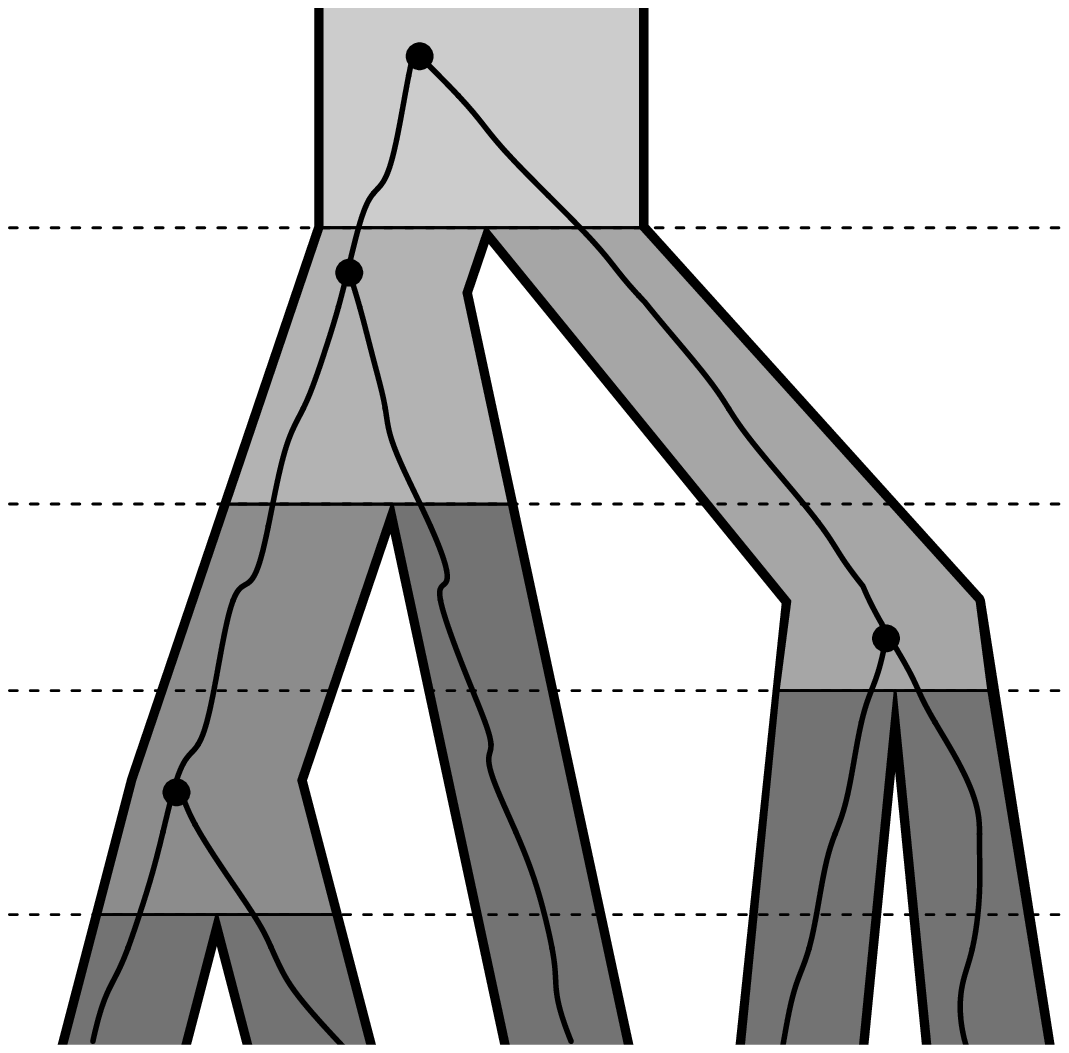}}
  \put(1800,1500){\includegraphics[width=.49\textwidth]{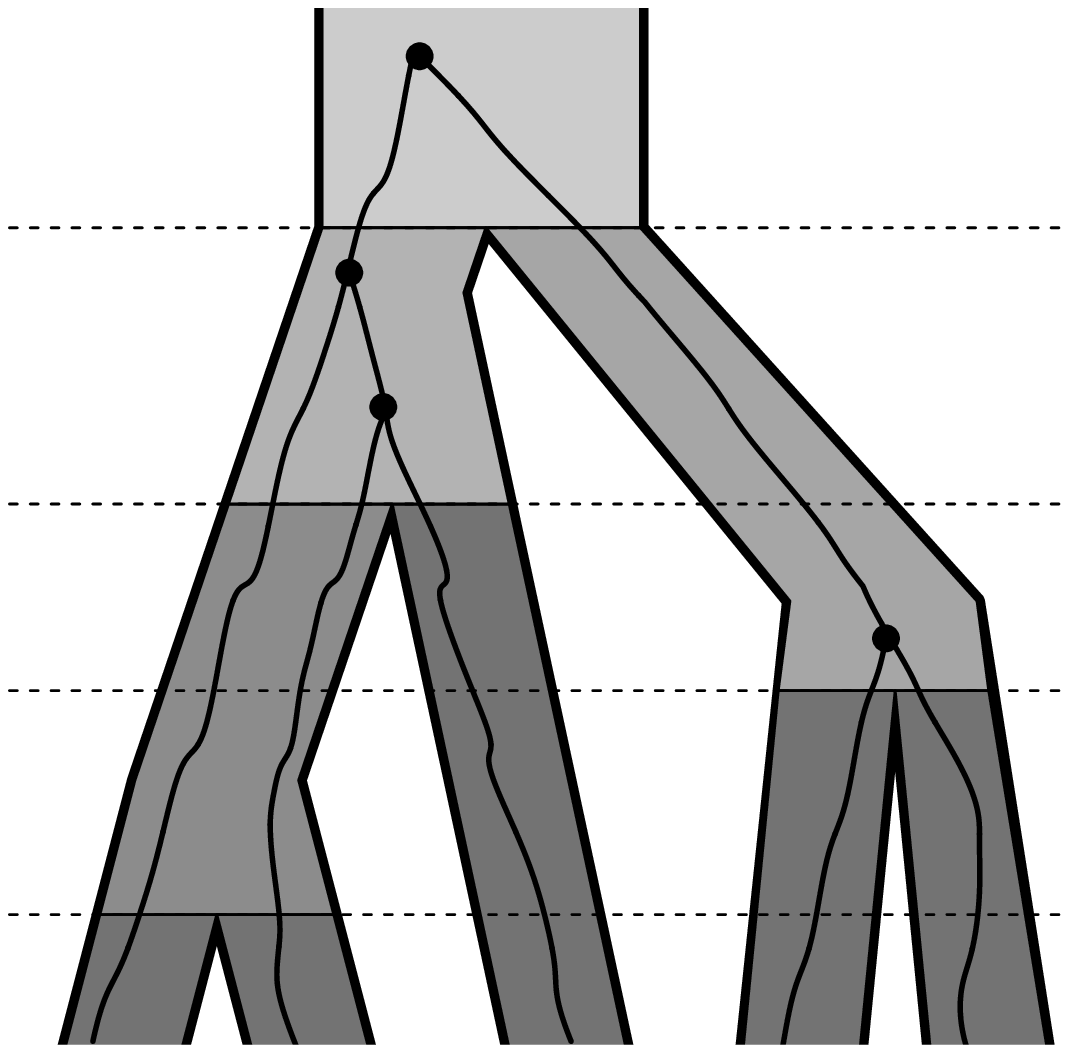}}
\put(0,0){\includegraphics[width=.49\textwidth]{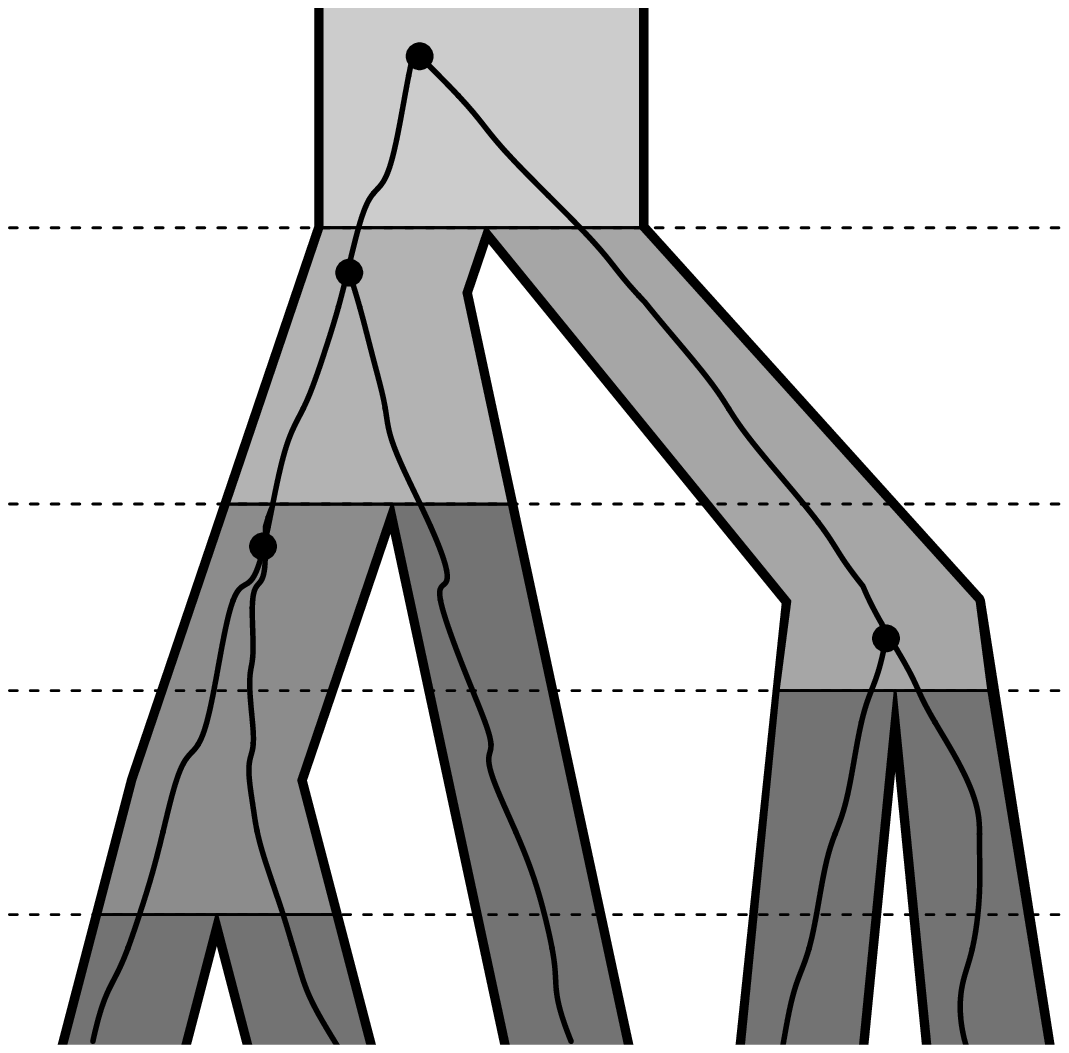}}
\put(1800,0){\includegraphics[width=.49\textwidth]{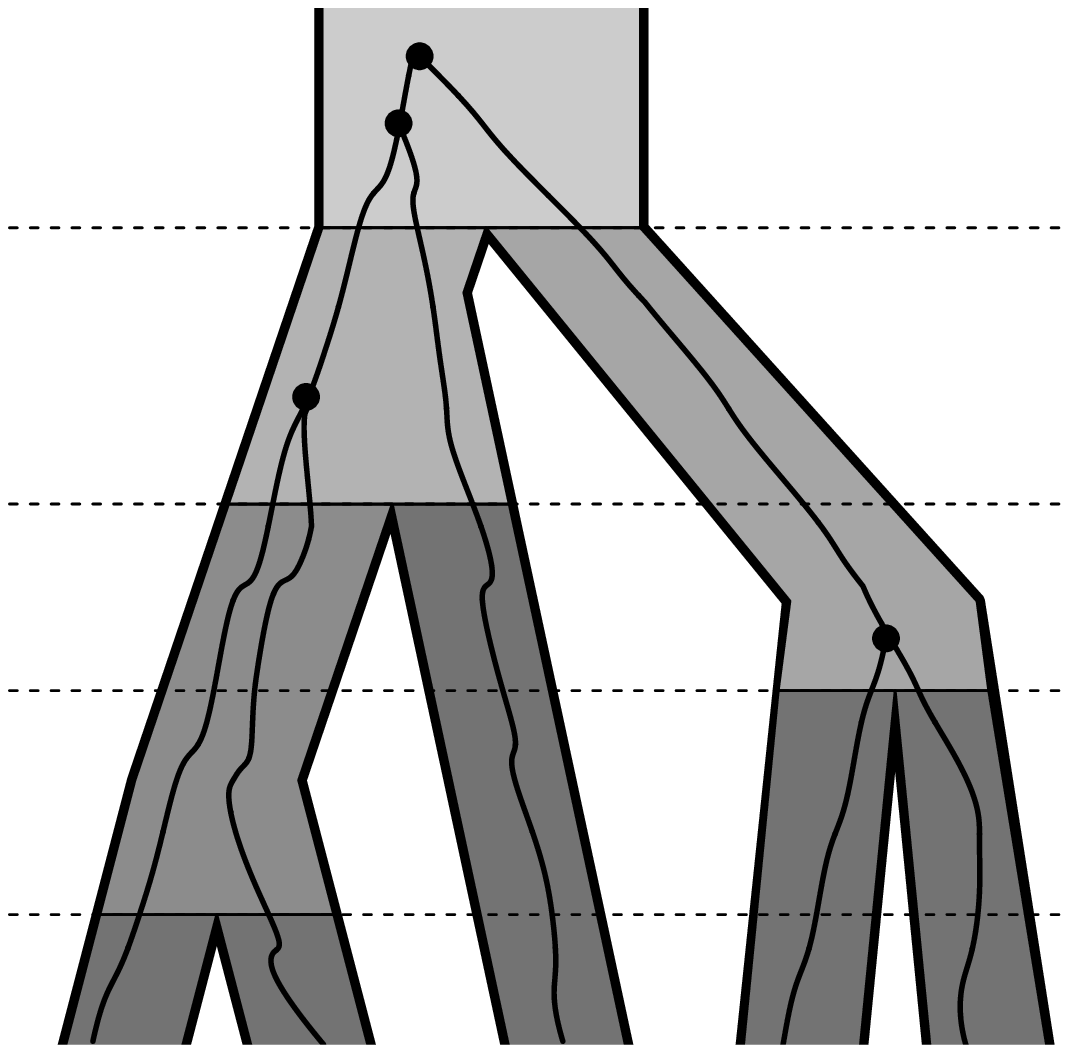}}
\put(420.00,290.00){\fontsize{11.23}{14.07}\selectfont   \makebox(72.0, 100.0)[c]{$A$\strut}}
\put(660.00,290.00){\fontsize{11.23}{14.07}\selectfont   \makebox(72.0, 100.0)[c]{$B$\strut}}
\put(910.00,290.00){\fontsize{11.23}{14.07}\selectfont   \makebox(72.0, 100.0)[c]{$C$\strut}}
\put(1150.00,290.00){\fontsize{11.23}{14.07}\selectfont   \makebox(72.0, 100.0)[c]{$D$\strut}}
\put(1350.00,290.00){\fontsize{11.23}{14.07}\selectfont  \makebox(72.0, 100.0)[c]{$E$\strut}}
\put(2200.00,290.00){\fontsize{11.23}{14.07}\selectfont   \makebox(72.0, 100.0)[c]{$A$\strut}}
\put(2420.00,290.00){\fontsize{11.23}{14.07}\selectfont   \makebox(72.0, 100.0)[c]{$B$\strut}}
\put(2690.00,290.00){\fontsize{11.23}{14.07}\selectfont   \makebox(72.0, 100.0)[c]{$C$\strut}}
\put(2950.00,290.00){\fontsize{11.23}{14.07}\selectfont   \makebox(72.0, 100.0)[c]{$D$\strut}}
\put(3150.00,290.00){\fontsize{11.23}{14.07}\selectfont  \makebox(72.0, 100.0)[c]{$E$\strut}}
\put(420.00,1790.00){\fontsize{11.23}{14.07}\selectfont   \makebox(72.0, 100.0)[c]{$A$\strut}}
\put(660.00,1790.00){\fontsize{11.23}{14.07}\selectfont   \makebox(72.0, 100.0)[c]{$B$\strut}}
\put(910.00,1790.00){\fontsize{11.23}{14.07}\selectfont   \makebox(72.0, 100.0)[c]{$C$\strut}}
\put(1150.00,1790.00){\fontsize{11.23}{14.07}\selectfont   \makebox(72.0, 100.0)[c]{$D$\strut}}
\put(1350.00,1790.00){\fontsize{11.23}{14.07}\selectfont  \makebox(72.0, 100.0)[c]{$E$\strut}}
\put(2200.00,1790.00){\fontsize{11.23}{14.07}\selectfont   \makebox(72.0, 100.0)[c]{$A$\strut}}
\put(2420.00,1790.00){\fontsize{11.23}{14.07}\selectfont   \makebox(72.0, 100.0)[c]{$B$\strut}}
\put(2690.00,1790.00){\fontsize{11.23}{14.07}\selectfont   \makebox(72.0, 100.0)[c]{$C$\strut}}
\put(2950.00,1790.00){\fontsize{11.23}{14.07}\selectfont   \makebox(72.0, 100.0)[c]{$D$\strut}}
\put(3150.00,1790.00){\fontsize{11.23}{14.07}\selectfont  \makebox(72.0, 100.0)[c]{$E$\strut}}
\put(270.00,1230.00){\fontsize{11.23}{14.07}\selectfont   \makebox(72.0, 100.0)[c]{$s_1$\strut}}
\put(270.00,940.00){\fontsize{11.23}{14.07}\selectfont   \makebox(72.0, 100.0)[c]{$s_2$\strut}}
\put(270.00,740.00){\fontsize{11.23}{14.07}\selectfont   \makebox(72.0, 100.0)[c]{$s_3$\strut}}
\put(270.00,500.00){\fontsize{11.23}{14.07}\selectfont   \makebox(72.0, 100.0)[c]{$s_4$\strut}}
\put(270.00,2730.00){\fontsize{11.23}{14.07}\selectfont   \makebox(72.0, 100.0)[c]{$s_1$\strut}}
\put(270.00,2440.00){\fontsize{11.23}{14.07}\selectfont   \makebox(72.0, 100.0)[c]{$s_2$\strut}}
\put(270.00,2240.00){\fontsize{11.23}{14.07}\selectfont   \makebox(72.0, 100.0)[c]{$s_3$\strut}}
\put(270.00,2000.00){\fontsize{11.23}{14.07}\selectfont   \makebox(72.0, 100.0)[c]{$s_4$\strut}}
\put(2050.00,2730.00){\fontsize{11.23}{14.07}\selectfont   \makebox(72.0, 100.0)[c]{$s_1$\strut}}
\put(2050.00,2440.00){\fontsize{11.23}{14.07}\selectfont   \makebox(72.0, 100.0)[c]{$s_2$\strut}}
\put(2050.00,2240.00){\fontsize{11.23}{14.07}\selectfont   \makebox(72.0, 100.0)[c]{$s_3$\strut}}
\put(2050.00,2000.00){\fontsize{11.23}{14.07}\selectfont   \makebox(72.0, 100.0)[c]{$s_4$\strut}}
\put(2050.00,1230.00){\fontsize{11.23}{14.07}\selectfont   \makebox(72.0, 100.0)[c]{$s_1$\strut}}
\put(2050.00,940.00){\fontsize{11.23}{14.07}\selectfont   \makebox(72.0, 100.0)[c]{$s_2$\strut}}
\put(2050.00,740.00){\fontsize{11.23}{14.07}\selectfont   \makebox(72.0, 100.0)[c]{$s_3$\strut}}
\put(2050.00,500.00){\fontsize{11.23}{14.07}\selectfont   \makebox(72.0, 100.0)[c]{$s_4$\strut}}
\put(840.00,1410.00){\fontsize{11.23}{14.07}\selectfont   \makebox(72.0, 100.0)[c]{$u_1$\strut}}
\put(840.00,2910.00){\fontsize{11.23}{14.07}\selectfont   \makebox(72.0, 100.0)[c]{$u_1$\strut}}
\put(640.00,890.00){\fontsize{11.23}{14.07}\selectfont    \makebox(72.0, 100.0)[c]{$u_3$\strut}}
\put(740.00,1180.00){\fontsize{11.23}{14.07}\selectfont    \makebox(72.0, 100.0)[c]{$u_2$\strut}}
\put(2620.00,2910.00){\fontsize{11.23}{14.07}\selectfont   \makebox(72.0, 100.0)[c]{$u_1$\strut}}
\put(2620.00,1410.00){\fontsize{11.23}{14.07}\selectfont   \makebox(72.0, 100.0)[c]{$u_1$\strut}}
\put(2570.00,2550.00){\fontsize{11.23}{14.07}\selectfont    \makebox(72.0, 100.0)[c]{$u_3$\strut}}
\put(580.00,2120.00){\fontsize{11.23}{14.07}\selectfont    \makebox(72.0, 100.0)[c]{$u_4$\strut}}
\put(2540.00,2680.00){\fontsize{11.23}{14.07}\selectfont    \makebox(72.0, 100.0)[c]{$u_2$\strut}}
\put(2480.00,1390.00){\fontsize{11.23}{14.07}\selectfont    \makebox(72.0, 100.0)[c]{$u_2$\strut}}
\put(2490.00,1060.00){\fontsize{11.23}{14.07}\selectfont    \makebox(72.0, 100.0)[c]{$u_3$\strut}}
\put(760.00,2680.00){\fontsize{11.23}{14.07}\selectfont    \makebox(72.0, 100.0)[c]{$u_2$\strut}}
\put(1320.00,2300.00){\fontsize{11.23}{14.07}\selectfont    \makebox(72.0, 100.0)[c]{$u_3$\strut}}
\put(1300.00,800.00){\fontsize{11.23}{14.07}\selectfont    \makebox(72.0, 100.0)[c]{$u_4$\strut}}
\put(3100.00,2300.00){\fontsize{11.23}{14.07}\selectfont    \makebox(72.0, 100.0)[c]{$u_4$\strut}}
\put(3100.00,800.00){\fontsize{11.23}{14.07}\selectfont    \makebox(72.0, 100.0)[c]{$u_4$\strut}}
\put(1480.00,1340.00){\fontsize{11.23}{14.07}\selectfont \makebox(72.0, 100.0)[c]{$\tau_1$\strut}}
\put(1480.00,1080.00){\fontsize{11.23}{14.07}\selectfont  \makebox(72.0, 100.0)[c]{$\tau_2$\strut}}
\put(1480.00,820.00){\fontsize{11.23}{14.07}\selectfont  \makebox(72.0, 100.0)[c]{$\tau_3$\strut}}
\put(1480.00,600.00){\fontsize{11.23}{14.07}\selectfont  \makebox(72.0, 100.0)[c]{$\tau_4$\strut}}
\put(1480.00,2840.00){\fontsize{11.23}{14.07}\selectfont \makebox(72.0, 100.0)[c]{$\tau_1$\strut}}
\put(1480.00,2580.00){\fontsize{11.23}{14.07}\selectfont  \makebox(72.0, 100.0)[c]{$\tau_2$\strut}}
\put(1480.00,2320.00){\fontsize{11.23}{14.07}\selectfont  \makebox(72.0, 100.0)[c]{$\tau_3$\strut}}
\put(1480.00,2100.00){\fontsize{11.23}{14.07}\selectfont  \makebox(72.0, 100.0)[c]{$\tau_4$\strut}}
\put(3260.00,2840.00){\fontsize{11.23}{14.07}\selectfont \makebox(72.0, 100.0)[c]{$\tau_1$\strut}}
\put(3260.00,2580.00){\fontsize{11.23}{14.07}\selectfont  \makebox(72.0, 100.0)[c]{$\tau_2$\strut}}
\put(3260.00,2320.00){\fontsize{11.23}{14.07}\selectfont  \makebox(72.0, 100.0)[c]{$\tau_3$\strut}}
\put(3260.00,2100.00){\fontsize{11.23}{14.07}\selectfont  \makebox(72.0, 100.0)[c]{$\tau_4$\strut}}
\put(3260.00,1340.00){\fontsize{11.23}{14.07}\selectfont \makebox(72.0, 100.0)[c]{$\tau_1$\strut}}
\put(3260.00,1080.00){\fontsize{11.23}{14.07}\selectfont  \makebox(72.0, 100.0)[c]{$\tau_2$\strut}}
\put(3260.00,820.00){\fontsize{11.23}{14.07}\selectfont  \makebox(72.0, 100.0)[c]{$\tau_3$\strut}}
\put(3260.00,600.00){\fontsize{11.23}{14.07}\selectfont  \makebox(72.0, 100.0)[c]{$\tau_4$\strut}}
\put(50.00,3000.00){\fontsize{14.23}{17.07}\selectfont   \makebox(72.0, 100.0)[c]{(a)\strut}}
\put(2050.00,3000.00){\fontsize{14.23}{17.07}\selectfont   \makebox(72.0, 100.0)[c]{(b)\strut}}
\put(50.00,1500.00){\fontsize{14.23}{17.07}\selectfont   \makebox(72.0, 100.0)[c]{(c)\strut}}
\put(2050.00,1500.00){\fontsize{14.23}{17.07}\selectfont   \makebox(72.0, 100.0)[c]{(d)\strut}}

\end{picture}
\vspace{-1cm}
\caption{In (a)--(d) the ranked species tree topology is $(((A,B)_4,C)_2,(D,E)_3)_1$.  (a) The ranked gene tree matches the ranked species tree.  (b)  The (ranked or unranked) gene tree does not match the species tree, and there is an incomplete lineage sorting event (a deep coalescence) because the lineages from species $A$ and $B$ fail to coalesce more recently than $s_2$. (c) The gene tree and species tree have the same unranked topology but have different ranked topologies, as $D$ and $E$ coalesce in the gene tree more recently than $A$ and $B$, while $A$ and $B$ is the most recent divergence in the species tree. The gene tree in (c) has ranked topology $(((A,B)_3,C)_2,(D,E)_4)_1$.  In (c), there are no incomplete lineage sorting events (no deep coalescences); however, there is an extra lineage at time $s_3$ which leads to the gene tree and species tree having different rankings.  In (c), all coalescences occur in the most recent possible interval consistent with the ranked gene tree, and 
 we have $\ell_1=2, \ell_2=3, \ell_3=5, \ell_4=5$, and $g_1 =  2$, $g_2 = 3$, $g_3 = 5$, $g_4 = 5$. (d) A gene tree with the same ranked topology as the gene tree in (c) but with coalescences occurring in different intervals. }
  \label{FigNotation}
\end{figure*}

 Given a fixed species tree, and assuming the gene tree evolved under the multi-species coalescent \cite{DegnanAndRosenberg09},
the most probable gene tree topology can have a different topology from that of the species tree. Such a gene tree topology is called an anomalous gene tree. In fact, for every species tree topology with at least $5$ leaves, we can choose edge lengths in the species tree topology such that anomalous gene trees exist \cite{DegnanAndRosenberg06}. This implies that  the gene tree topology appearing most often when considering different genes might not agree with the  species tree topology, thus we cannot use a simple majority-heuristic to infer the species tree from a collection of gene trees. 
Instead we need statistical tools rather than majority rule heuristics for inferring the species tree based on gene trees.

Current methods for inferring species trees from gene trees in this setting can be divided into topology-based and genealogy-based methods, in which the input for a reconstruction algorithm accepts either gene tree topologies or genealogies, i.e., gene trees with branch lengths (coalescence times).  Topology-based methods include Minimize Deep Coalescence (MDC) \cite{maddison2006,than2009}, STAR \cite{LiuEtAl09:systbiol}, STELLS \cite{wu2011}, rooted triple consensus \cite{EwingEtAl08} and other consensus and supertree methods \cite{DegnanEtAl09,WangDegnan2011}.  Genealogy-based methods include Bayesian and likelihood methods such as BEST, *BEAST, and STEM \cite{heled2010,kubatko2009,Liu2007} and clustering and distance-based methods \cite{Liu2011,LiuEtAl10:jmathbiol,LiuEtAl09:systbiol,MosselAndRoch10}.  Possible pros and cons of the two approaches are that topology-based methods can be computationally faster and less sensitive to errors in estimating gene trees (and gene tree branch lengths) from sequence data \cite{huangKubatkoKnowles2010}, while methods that use coalescence times, particularly using Bayesian modelling, can be the most accurate when model assumptions are correct \cite{liuReview2009}.

Another possibility that has been so far unexplored in methods for inferring species trees from gene trees is to use {\it ranked} gene trees, in which the temporal order of the nodes of the gene tree (the coalescence times) is used, but not the continuous-valued branch lengths.  This approach might therefore be intermediate between purely topology-based methods and genealogy-based methods.  By preserving more of the temporal information in the gene tree nodes, the hope is to develop methods that are more powerful than purely topology-based methods and that are still computationally efficient and robust to errors in estimating gene trees and gene tree branch lengths from sequence data.

\begin{table*}\label{Table}
\begin{center}\caption{Notation used in the paper}
\begin{tabular}{l l}
\hline \hline Symbol & meaning\\
\hline $\cT$ & species tree with real-valued divergence times\\
$\sT$ & ranked gene tree (real-valued coalescence times not specified)\\
$n$ & the number of leaves of $\cT$ and $\sT$\\
$s_i$ & speciation times, with $s_1 > \cdots > s_{n-1}$, let $s_0 = \infty$\\
$\tau_i$ & intervals between speciation times, $\tau_i = [s_i, s_{i-1})$\\
$\ell_i$ & the number of gene tree lineages at time $s_i$\\
$m_i$ & the number of coalescence events in interval $\tau_i$\\
$\sT_{i,\ell_i}$ & the ranked gene tree observed from time 0 to time $s_i$\\
$g_i$ & the minimum number of gene tree lineages at time $s_i$\\
$y_{i,z}$ & population $z$ in interval $\tau_i$  in beaded tree\\
$u_i$ & internal node (coalescence) with rank $i$ in the gene tree, $u_1$ is most ancient, \\ 
&$u_{n-1}$ is the most recent\\
$k_{i,j,z}$ & the number of lineages available for coalescence in population $y_{i,z}$ just after the $j$th coalescence\\
& (considered forward in time) in interval $\tau_i$; $k_{i,0,z}$ is the number of lineages ``exiting" at time $s_{i-1}$\\
\james{$\delta(y), \delta(u)$} & \james{the set of leaves descended from a node of the species tree or gene tree, respectively}\\
$\text{lca}(u)$ & for a node $u$ of the gene tree, the node $y$ of the species tree with largest rank such that  $\delta(u) \subset \delta(y)$\\
$\tau(y)$ & \tanja{for a node $y$ with rank $i$ on the species tree, we denote $\tau(y) = \tau_i$ (the interval immediately above $y$)} \\
$\lambda_{i,j}$ & the overall coalescence rate in interval $\tau_i$ immediately preceding (backwards in time) \\
& the $j$th coalescence\\
$h_k^1$ & number of sequences of coalescences above the root of the species tree starting with $k$ lineages\\
$f_i$ & the joint density of coalescence times in interval $\tau_i$\\
\hline
\end{tabular}
\end{center}
\end{table*}

In \cite{Degnan2012}, a first step toward developing methods that use ranked gene trees for inferring species trees was taken by providing formulae to calculate the probability of a ranked gene tree given a species tree.  \james{The previous work, however, was based on an exponential enumeration of what were called {\em ranked coalescent histories} and did not provide an algorithm for computing some of the key terms in the probability of individual ranked histories.} \tanja{
In this paper, we improve this previous (computationally inefficient) approach, by providing
 a method for computing probabilities of ranked gene trees given species trees which is polynomial in the number of leaves using a dynamic programming approach.}  
 
 Methods for computing probabilities of ranked gene trees efficiently may also be of interest in the context of computing probabilities of unranked gene trees, particularly because no polynomial time algorithm has been found for calculating the probability of a gene tree topology given a species tree under the multispecies coalescent \cite{DegnanAndSalter05,Rosenberg07:jcb,ThanEtAl07,wu2011}.  The probability of an unranked gene tree topology can be obtained by summing over all ranked gene tree topologies with the same topology.   Thus, for unranked gene trees with particular shapes where the number of rankings increases in polynomial time, using ranked gene trees can potentially increase the speed of computing probabilities of unranked gene trees as well. \james{We note that a completely unbalanced gene tree has only one ranking, while the number of rankings can be exponential in the number of leaves when gene trees become more balanced.  Thus, our approach for calculating unranked gene tree probabilities will be most useful for less balanced ranked gene trees.
}

The bulk of the paper consists of the derivation of the polynomial time method for computing ranked gene tree probabilities.  \james{The algorithm is summarized in section \ref{SecAlg}.} This is followed by a discussion of applications to computing probabilities of unranked gene tree topologies and to inferring ranked species trees under maximum likelihood and a modification to the MDC criterion.


\section{Calculating the probability of a ranked gene tree topology}
In the following, we will derive the probability of a ranked gene tree topology given a species tree, $\bP[\sT \,|\,\cT] $.  
Equations (\ref{EqnResult},\ref{EqnHelp},\ref{EqnRec},\ref{EqnGcondG},\ref{E:gi},\ref{E:kijz2}) allow the calculation of $\bP[\sT \,|\,\cT] $ in time $O(n^5)$.
The model giving rise to the gene tree is the multi-species coalescent with constant population sizes \cite{DegnanAndRosenberg09}. Each species consists of a population of constant size where lineages merge according to the coalescent. Thus, lineages from two different species may coalesce any time previous to  the split of the two species.

We begin with some notation, which is also summarized in Table 1.  Let time be $0$ today and increasing going into the past.
Let $\cT$ be a species tree with $n$ species, and thus $n-1$ speciation events (denoted by $1,\ldots,n-1$) occurring at times $s_1> \cdots > s_{n-1}$. 
Denote the interval between speciation event $i-1$ and speciation event $i$ by $\tau_i$, see Figure 1.

Let $\sT$ be a ranked gene tree topology.   It is convenient to use the same labels for the leaves of $\sT$ and of $\cT$.  This is a slight abuse of notation, as leaf $A$ of $\cT$ refers to a population (or species), and $A$ of $\sT$ refers to a gene sampled from population $A$.
We denote the nodes of $\sT$ (which are coalescence events) by $u_1, \dots, u_{n-1}$, where node $u_j$ has rank $j$, and where higher rank indicates a more recent coalescence.  A ranked tree topology can be notated similarly to Newick notation, putting the rank as a subscript for each node, see also Figure 1.

Let 
$\sT_{i,\ell_i}$ be \tanja{ part of} a ranked gene tree evolving on a species tree between time $s_i$ and \tanja {time $0$ (i.e. the present)}.  $\sT_{i,\ell_i}$ consists of $\ell_i$ \tanja{ gene tree} lineages at speciation time $s_{i}$ and the coalescent history of  $\sT_{i,\ell_i}$  in time interval $(0,s_i)$ is consistent with the ranked gene tree $\sT$. 
Let $g_i$ be the minimum number of lineages required in the ranked gene tree at time $s_i$ such that $\sT$ can be embedded into the species tree $\cT$.
Note that $n\geq \ell_i \geq g_i >i$. Next we provide a dynamic programming approach for calculating the probability of a ranked gene tree given a species tree. An efficient way to determine the required quantities $g_1,\ldots,g_{n-1}$ is provided in Section \ref{SecGK}.

\tanja{Essentially, in our approach, we traverse the intervals between speciation events going back in time, $\tau_{n-1},\ldots,\tau_2$ (formalized in Theorem \ref{Thm2}), and calculate the probability of the appropriate coalescent events occuring in interval $\tau_i$ based on how many coalescent events happened in the later intervals $\tau_{i+1},\ldots, \tau_{n-1}$ (Theorem \ref{Thm3}). Finally with Theorem \ref{pT}, we account for the most ancetral time interval $\tau_1$.}

\begin{theo} \label{pT}
The probability of a ranked gene tree given a species tree is,
\begin{equation}
\bP[\sT \,|\,\cT] = \sum_{\ell_1=g_1}^n \bP[\sT_{1,\ell_1} \,|\,\cT] / H_{\ell_1} \label{EqnResult}
\end{equation}
where 
\begin{equation}\label{EqnHelp}
H_{\ell_1} = \ell_1!(\ell_1-1)!/2^{\ell_1-1}
\end{equation}
 is the probability for the coalescences above the root appearing in the right order \cite{edwards1970}.
\end{theo}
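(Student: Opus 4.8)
The plan is to condition on $\ell_1$, the number of gene lineages surviving to the root speciation time $s_1$, and to exploit the Markov structure of the coalescent to separate the genealogy in the interval $(0,s_1)$ from the genealogy in the most ancient interval $\tau_1=[s_1,\infty)$ lying above the root of $\cT$. By the law of total probability over the admissible values of $\ell_1$,
\[
\bP[\sT\,|\,\cT]=\sum_{\ell_1} \bP\big[\sT\ \text{and exactly}\ \ell_1\ \text{lineages at}\ s_1\,\big|\,\cT\big].
\]
For a fixed $\ell_1$, the event that the realized ranked gene tree equals $\sT$ splits into two parts: (i) in $(0,s_1)$ the $n-\ell_1$ most recent coalescences (the higher-ranked nodes $u_{\ell_1},\dots,u_{n-1}$) occur in the order and with the merging pairs prescribed by $\sT$, leaving exactly $\ell_1$ lineages at $s_1$; and (ii) above the root these $\ell_1$ lineages undergo the $\ell_1-1$ most ancient coalescences $u_1,\dots,u_{\ell_1-1}$ in the prescribed ranked order. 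By the definition of $\sT_{1,\ell_1}$, part (i) is precisely the event whose probability is $\bP[\sT_{1,\ell_1}\,|\,\cT]$.

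Next I would factor the joint probability. Because under the multispecies coalescent the waiting times in disjoint time intervals are conditionally independent given the lineage counts at the interval boundaries, and because the pair chosen at each coalescence depends only on the current set of lineages, parts (i) and (ii) are conditionally independent once a specified set of $\ell_1$ lineages is present at $s_1$. Hence
\[
\bP\big[\sT\ \text{and}\ \ell_1\ \text{at}\ s_1\,\big|\,\cT\big]=\bP[\sT_{1,\ell_1}\,|\,\cT]\cdot\bP[\text{(ii)}].
\]
It then remains to evaluate $\bP[\text{(ii)}]$ and to fix the summation range.

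For part (ii), above the root the $\ell_1$ lineages evolve in a single ancestral population of constant size over the semi-infinite interval $[s_1,\infty)$, so with probability one they all coalesce to a single common ancestor. By the classical result of Edwards \cite{edwards1970}, under the coalescent every labeled history on $\ell_1$ distinguishable lineages is equally likely and independent of the coalescence times; since there are $H_{\ell_1}=\prod_{k=2}^{\ell_1}\binom{k}{2}=\ell_1!(\ell_1-1)!/2^{\ell_1-1}$ such ranked histories, the single prescribed order has probability $\bP[\text{(ii)}]=1/H_{\ell_1}$. Restricting the sum to the embeddable range $g_1\le\ell_1\le n$ then yields (\ref{EqnResult}). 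The step requiring the most care is the factorization: I must argue that conditioning on the lineage configuration at $s_1$ renders the \emph{ranking} of the coalescences below $s_1$ independent of the ranking above the root, not merely their unranked topologies. This is exactly where the memorylessness of the coalescent and the exchangeability underlying the counting of Edwards are both invoked, and it should be stated explicitly rather than taken for granted.
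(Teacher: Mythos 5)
Your proof is correct and follows exactly the argument the paper intends: the paper states Theorem \ref{pT} without a written proof, relying implicitly on the same decomposition you make explicit (condition on $\ell_1$, use the Markov property at $s_1$ to factor off the single-population coalescent above the root, and invoke Edwards' result that a prescribed labeled history on $\ell_1$ lineages has probability $\prod_{k=2}^{\ell_1}\binom{k}{2}^{-1}=1/H_{\ell_1}$). Your added care about the factorization of the \emph{ranking} given the lineage configuration at $s_1$ is a useful explicit justification of a step the paper leaves tacit, and you also correctly note that the paper's phrasing should read that $1/H_{\ell_1}$, not $H_{\ell_1}$, is the probability of the prescribed order.
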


\tanja{For precalculated $ \bP[\sT_{1,\ell_1}| \cT]$ ($\ell_1=2,\ldots,n$) the complexity of calculating $\bP[\sT \,|\,\cT] $ is thus $O(n)$.
Next, we will provide a recursive way to calculate $ \bP[\sT_{1,\ell_1}| \cT]$ for $\ell_1 =2, \ldots, n$ in polynomial time, thus $\bP[\sT \,|\,\cT]$ can be calculated in polynomial time.}

\begin{theo} \label{Thm2}
\tanja{The probability $ \bP[\sT_{i,\ell_i}|\cT]$ can be calculated for all $i$ recursively (with $l_i\geq g_i$),}
\begin{align} \label{EqnRec}
 &\bP[\sT_{i,\ell_i} | \cT] \\
 &=    \underset{\ell_{i+1}=\max (\ell_i,g_{i+1})}{\overset{n}{\sum}}    \hspace{-.5cm} \bP[\sT_{i,\ell_i} | \sT_{i+1,\ell_{i+1}} , \cT]  \bP[\sT_{i+1,\ell_{i+1}} | \cT] \notag 
\end{align}
with
$$\bP[\sT_{n-1,n}\,|\,\cT] = 1.$$
The complexity of calculating $ \bP[\sT_{1,\ell_1}\,|\,\cT] $ for $\ell_1 =2, \ldots, n$  is $O(n^3)$, given we know $ \bP[\sT_{i,\ell_i} \,|\, \sT_{i+1,\ell_{i+1}}, \cT] $ for all $i,\ell_i,\ell_{i+1}$.
\end{theo}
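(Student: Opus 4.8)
The plan is to derive the recursion (\ref{EqnRec}) as an instance of the law of total probability, using that the events $\sT_{i,\ell_i}$ are nested in time, and then to verify the stated summation limits, base case, and operation count. First I would fix the meaning of the event $\sT_{i,\ell_i}$: it is the event that the coalescent history of the gene tree restricted to the interval $(0,s_i)$ is consistent with the ranked topology $\sT$ and that exactly $\ell_i$ lineages are present at time $s_i$. Because $s_{i+1}<s_i$, every realization contributing to $\sT_{i,\ell_i}$ necessarily passes through the more recent time $s_{i+1}$ with some number $\ell_{i+1}$ of lineages, and its further restriction to $(0,s_{i+1})$ is then automatically consistent with $\sT$; that is, it realizes the event $\sT_{i+1,\ell_{i+1}}$. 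Consequently the events $\{\sT_{i+1,\ell_{i+1}}\}_{\ell_{i+1}}$ refine $\sT_{i,\ell_i}$ into disjoint pieces,
\[
\sT_{i,\ell_i} \;=\; \bigsqcup_{\ell_{i+1}} \big(\sT_{i,\ell_i}\cap \sT_{i+1,\ell_{i+1}}\big),
\]
and summing probabilities followed by the chain rule $\bP[\sT_{i,\ell_i}\cap\sT_{i+1,\ell_{i+1}}\mid\cT]=\bP[\sT_{i,\ell_i}\mid \sT_{i+1,\ell_{i+1}},\cT]\,\bP[\sT_{i+1,\ell_{i+1}}\mid\cT]$ gives exactly (\ref{EqnRec}).

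It remains to justify that the sum may be truncated to $\ell_{i+1}\in\{\max(\ell_i,g_{i+1}),\dots,n\}$, which I would do by showing the omitted terms vanish. Since lineages only merge as time increases into the past, the lineage count is nonincreasing in time, so a realization with $\ell_i$ lineages at $s_i$ must have at least $\ell_i$ lineages at the more recent time $s_{i+1}$; hence $\bP[\sT_{i,\ell_i}\mid \sT_{i+1,\ell_{i+1}},\cT]=0$ whenever $\ell_{i+1}<\ell_i$. Likewise, by the definition of $g_{i+1}$ as the minimum number of lineages at $s_{i+1}$ for which $\sT$ embeds in $\cT$, we have $\bP[\sT_{i+1,\ell_{i+1}}\mid\cT]=0$ for $\ell_{i+1}<g_{i+1}$, while $\ell_{i+1}\le n$ is forced by there being only $n$ sampled lineages. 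This gives the lower limit $\max(\ell_i,g_{i+1})$ and upper limit $n$. For the base case I would note that in the interval $(0,s_{n-1})$ below the most recent speciation every species contains a single sampled lineage, so no coalescence is possible; thus $n$ lineages are present at $s_{n-1}$ with certainty and the history there is vacuously consistent with $\sT$, giving $\bP[\sT_{n-1,n}\mid\cT]=1$.

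For the complexity claim I would tabulate the quantities $\bP[\sT_{i,\ell_i}\mid\cT]$ by a backward sweep $i=n-1,n-2,\dots,1$. There are $O(n)$ choices of $i$ and, for each, $O(n)$ admissible values of $\ell_i$, so $O(n^2)$ table entries; each entry is computed from (\ref{EqnRec}) as a sum of at most $n$ already-tabulated terms, at cost $O(n)$. Assuming the conditionals $\bP[\sT_{i,\ell_i}\mid\sT_{i+1,\ell_{i+1}},\cT]$ are available, the total cost is $O(n^3)$, and in particular all of $\bP[\sT_{1,\ell_1}\mid\cT]$ for $\ell_1=2,\dots,n$ are obtained within this budget. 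I expect the only genuinely delicate point of this argument to be the precise bookkeeping of the event $\sT_{i,\ell_i}$ and the verification that the out-of-range terms vanish; the real analytic difficulty, namely evaluating the per-interval conditionals $\bP[\sT_{i,\ell_i}\mid\sT_{i+1,\ell_{i+1}},\cT]$ in closed form via the Markov property of the coalescent, is deferred to Theorem \ref{Thm3} and is taken as given here.
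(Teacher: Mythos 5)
Your proposal is correct and follows essentially the same route as the paper: a backward recursion initialized with $\bP[\sT_{n-1,n}\,|\,\cT]=1$, the law of total probability over $\ell_{i+1}$ combined with the chain rule to obtain Equation \eqref{EqnRec}, and an $O(n^2)$-entries-times-$O(n)$-per-entry count giving $O(n^3)$. The only difference is that you explicitly verify that the terms with $\ell_{i+1}<\max(\ell_i,g_{i+1})$ vanish (monotonicity of the lineage count and the definition of $g_{i+1}$), a point the paper leaves implicit; your complexity accounting is slightly coarser than the paper's sum of binomial coefficients but yields the same bound.
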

\begin{proof}
At the time of the most recent speciation event, $s_{n-1}$, we have $n$ lineages with probability $1$, which is the initial value of the recursion.
Calculating $ \bP[\sT_{i,\ell_i} | \cT]$ for $i<n-1$ can be done in the following way,
\begin{align*}
 &\bP[\sT_{i,\ell_i} | \cT] \\
 &= \sum_{\ell_{i+1}=\max (\ell_i,g_{i+1})}^n   \bP[\sT_{i,\ell_i},  \sT_{i+1,\ell_{i+1}} | \cT]  \notag \\
&=  \sum_{\ell_{i+1}=\max (\ell_i,g_{i+1})}^n    \bP[\sT_{i,\ell_i} | \sT_{i+1,\ell_{i+1}}, \cT  ]  \bP[\sT_{i+1,\ell_{i+1}} | \cT].  \notag
\end{align*}
Suppose $ \bP[\sT_{i,\ell_i} | \sT_{i+1,\ell_{i+1}}, \cT  ] $ is known.
Given we calculated the probability $  \bP[\sT_{i+1,\ell_{i+1}} | \cT]  $ for  $\ell_{i+1} = i+2,\ldots, n$, then calculating $ \bP[\sT_{i,\ell_i} | \cT]$ for $\ell_i = i+1,\ldots,n$ requires $O(\sum_{j=1}^{n-i} j ) = O({n-i+1 \choose 2})$ calculations.
Summing up over $i=1,\ldots,n-1$ yields a complexity of $O(\sum_{i=2}^n {i \choose 2})= O({n+1 \choose 3}) = O(n^3)$.
\end{proof}

It remains to determine $  \bP[\sT_{i-1,\ell_{i-1}} | \sT_{i,\ell_{i}}, \cT  ]$. Note
that during the interval $\tau_i$, we have $i$ branches in the species
tree. 
Let $m_i$ be the number of coalescent events in $\tau_i$, so $m_i=\ell_{i}-\ell_{i-1}$. 
Let the number of lineages on branch $z$ just after the $j$th
coalescent event (going forward in time) in $\tau_i$ be $k_{i,j,z}$. Calculation of $k_{i,j,z}$ can be done efficiently as shown in Section \ref {SecGK}. 


\begin{theo}
\label{Thm3}
 We have,
\begin{eqnarray}
 \bP[\sT_{i-1,\ell_{i-1}} | \sT_{i,\ell_{i}}, \cT  ] &=&  \sum_{j=0}^{m_i}   \frac{e^{-\lambda_{i,j} (s_{i-1}-s_i)}}{\prod_{k=0,k\neq j}^{m_i} (\lambda_{i,k}-\lambda_{i,j})} \label{EqnGcondG}
 \end{eqnarray}
 where $\lambda_{i,j} = \sum_{z=1}^{i}  \binom{k_{i,j,z}}{2}$ and $\binom{1}{2}:=0$.
\end{theo}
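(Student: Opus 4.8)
The plan is to compute $\bP[\sT_{i-1,\ell_{i-1}}\mid\sT_{i,\ell_i},\cT]$ as an explicit integral over the coalescence times in $\tau_i$, and then to evaluate that integral by recognising it as a normalised density of a sum of independent exponential variables.

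First I would pin down the probabilistic picture inside $\tau_i=[s_i,s_{i-1})$, writing $T:=s_{i-1}-s_i$ for its length. Conditioning on $\sT_{i,\ell_i}$ fixes the labelled configuration of the $\ell_i$ lineages present at $s_i$: it determines which clade each lineage subtends, hence which species-tree branch $z$ it occupies. The ranked topology of $\sT$ then prescribes the ordered list of the $m_i=\ell_i-\ell_{i-1}$ coalescences that must occur in $\tau_i$ --- which pair of lineages merges at each step and in which branch --- so the successive counts $k_{i,j,z}$, and with them the total rates $\lambda_{i,j}=\sum_{z=1}^i\binom{k_{i,j,z}}{2}$, are all determined (singleton branches contribute nothing, consistent with $\binom{1}{2}:=0$). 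Two facts about the within-population coalescent are then used: the total rate at which \emph{some} coalescence leaves the configuration with $\ell_{i-1}+j$ lineages is $\lambda_{i,j}$, whereas the \emph{particular} merger demanded by $\sT$ (a fixed pair of lineages in a fixed branch) fires at rate exactly $1$.

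Next I would assemble the integral. Tracing the process backwards from $s_i$, it holds in the configuration indexed by $j$ for a time $t_j$, surviving all coalescences with probability $e^{-\lambda_{i,j}t_j}$, and then performs the prescribed merger at rate $1$; after the last of the $m_i$ mergers it must remain in the $\ell_{i-1}$-lineage configuration for the leftover time without any further coalescence, which contributes the survival factor $e^{-\lambda_{i,0}t_0}$ with $t_0=T-\sum_{j=1}^{m_i}t_j$. Since each prescribed merger contributes a factor $1$, this yields
\begin{equation*}
\bP[\sT_{i-1,\ell_{i-1}}\mid\sT_{i,\ell_i},\cT]=\int_{\substack{t_1,\ldots,t_{m_i}\ge 0\\ t_1+\cdots+t_{m_i}\le T}}e^{-\lambda_{i,0}\,(T-\sum_{j=1}^{m_i}t_j)}\prod_{j=1}^{m_i}e^{-\lambda_{i,j}t_j}\,dt_1\cdots dt_{m_i}.
\end{equation*}

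Finally I would evaluate this integral. Each backwards coalescence in a branch $z$ drops the rate by $\binom{k_{i,j,z}}{2}-\binom{k_{i,j,z}-1}{2}=k_{i,j,z}-1\ge 1$, so $\lambda_{i,0}<\lambda_{i,1}<\cdots<\lambda_{i,m_i}$; in particular the rates are distinct, which is exactly the condition making the right-hand side of (\ref{EqnGcondG}) well defined (no vanishing denominator). Comparing with the $m_i$-fold convolution that defines the density $\phi$ of $Y_0+\cdots+Y_{m_i}$ for independent $Y_j\sim\mathrm{Exp}(\lambda_{i,j})$, one sees that the displayed integral equals $\phi(T)\big/\prod_{j=0}^{m_i}\lambda_{i,j}$. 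Inverting the Laplace transform $\prod_j\lambda_{i,j}/(s+\lambda_{i,j})$ by partial fractions gives the classical expression $\phi(T)=\bigl(\prod_j\lambda_{i,j}\bigr)\sum_{j=0}^{m_i}e^{-\lambda_{i,j}T}\big/\prod_{k\ne j}(\lambda_{i,k}-\lambda_{i,j})$, and dividing through yields precisely (\ref{EqnGcondG}); alternatively one integrates out $t_{m_i}$ and inducts on $m_i$, matching the partial-fraction recursion. The main obstacle is not this routine evaluation but the bookkeeping of the middle step: one must take survival against the full rate $\lambda_{i,j}$ while charging each prescribed merger only rate $1$, and must retain the terminal no-coalescence factor $e^{-\lambda_{i,0}t_0}$ --- it is this factor that makes the rate $\lambda_{i,0}$ of the $\ell_{i-1}$-lineage configuration enter the sum even though no coalescence occurs there.
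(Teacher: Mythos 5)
Your proposal follows essentially the same route as the paper: write the conditional probability as the integral over the simplex of the joint density $e^{-\sum_j \lambda_{i,j} v_j}$ of the prescribed inter-coalescence waiting times, and evaluate it by recognising a hypoexponential (sum of independent exponentials with distinct rates) and invoking the partial-fraction form of its density. Your justification of that density --- survival against the total rate $\lambda_{i,j}$ while the one prescribed pair merges at rate $1$ --- is actually more explicit than the paper's, which simply cites the formula, and your verification that the $\lambda_{i,j}$ are strictly increasing (hence distinct) is the same observation the paper relies on.

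The one step that needs care is the identification of the integral with $\phi(T)\big/\prod_{j=0}^{m_i}\lambda_{i,j}$: this presupposes $\lambda_{i,0}>0$, but $\lambda_{i,0}=0$ genuinely occurs (whenever $\ell_{i-1}=i$, i.e.\ every population has fully coalesced by $s_{i-1}$; see interval $\tau_3$ of Figure 1c, where the paper's worked example has $\lambda_{3,0}=0$). In that case $Y_0\sim\mathrm{Exp}(\lambda_{i,0})$ is not a distribution and the product of rates vanishes, so the main route as written divides by zero. The paper splits its proof into two cases for exactly this reason, using the \emph{cumulative distribution function} of the hypoexponential of $Y_1,\dots,Y_{m_i}$ when $\lambda_{i,0}=0$ and checking that it collapses to the same expression because $\lambda_{i,0}-\lambda_{i,j}=-\lambda_{i,j}$. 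Your argument is easily rescued --- either by your own fallback of integrating out one variable at a time and inducting (which never divides by $\lambda_{i,0}$), or by observing that both sides of the identity are continuous in $\lambda_{i,0}$ where the rates are distinct and letting $\lambda_{i,0}\downarrow 0$ --- but the degenerate case should be flagged and handled explicitly.
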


\begin{proof}

The density for the coalescence events in interval $\tau_i$ can be obtained by considering the waiting time to the ``next" coalescent event (going backwards in time) as being due to competing exponentials in the different branches, where the coalescence rate within branch $z$ is $\binom{k_{i,j,z}}{2}$.  Thus, the waiting time until the next coalescent event \tanja{has rate} $\lambda_{i,j} = \sum_{z=1}^i \binom{k_{i,j,z}}{2}$.  

We denote the time between the $j$th and $(j+1)$st coalescent event as $v_j$, where $v_0$ is the time between $s_{i-1}$ and the first (least recent) coalescent event in $\tau_i$ and
with $v_{m_i}$ being the time between $s_i$ and coalescent event $m_i$.

The density for the coalescent events in the interval $\tau_{i}$ is \cite{Degnan2012},
   \begin{eqnarray*}
f_i(v_{0}, v_1, \ldots, v_{m_i}) &=&
 e^{-  \sum_{j=0}^{m_i}\sum_{z=1}^{i}  \binom{k_{i,j,z}}{2}v_j}\\
  &=&
  e^{-  \sum_{j=0}^{m_i}   \lambda_{i,j} v_j}.  
     \end{eqnarray*}
     
   \tanja{  It remains to integrate over $v$, for which we distinguish between case (i) $\lambda_{i,0} = 0$, and case (ii) $\lambda_{i,0} > 0$.}
     
Case (i):   If $\lambda_{i,0} = 0$ (which occurs if $\ell_{i-1} = i$, i.e., all lineages within each population coalesce), then we rewrite $f_i$ as, 
      \begin{equation}\label{E:fi1}
      f_i(v_0,v_1, \dots, v_{m_i}) =   \frac{ \prod_{j=1}^{m_i} \lambda_{i,j} e^{-   \lambda_{i,j} v_j  }}{ \prod_{j=1}^{m_i} \lambda_{i,j} }.
      \end{equation}
      Using the fact that the integral of the numerator of Equation \eqref{E:fi1} is  a hypoexponential distribution based on the sum of $m_i$ exponential random variables \cite{Ross2007} \tanja{(with density functions $\lambda_{i,j} e^{-   \lambda_{i,j} v_j  }$, $j=1,\ldots,m_i$)}, the probability of the coalescent events in the interval is the {\it cumulative distribution function} of the hypoexponential distribution evaluated at $s_{i-1} - s_i=\sum_{j=0}^{m_i} v_i$.  Thus, with $\lambda_{i,j}<\lambda_{i,j+1}$, 
      
      \tanja{
      \begin{align}
       \bP[\sT_{i-1,\ell_{i-1}} &| \sT_{i,\ell_{i}}, \cT  ] \notag\\
       &= \frac{1}{\prod_{j=1}^{m_i}\lambda_{i,j}} - \sum_{j=1}^{m_i} \frac{e^{-\lambda_{i,j}(s_{i-1}-s_i)}}{\lambda_{i,j}\prod_{k=1,k\neq j}^{m_i} (\lambda_{i,k}-\lambda_{i,j})}\notag \\
       &=  \frac{1}{\prod_{j=1}^{m_i}\lambda_{i,j}} + \sum_{j=1}^{m_i} \frac{e^{-\lambda_{i,j}(s_{i-1}-s_i)}}{\prod_{k=0,k\neq j}^{m_i} (\lambda_{i,k}-\lambda_{i,j})}\notag \\
      &= \sum_{j=0}^{m_i}\frac{e^{-\lambda_{i,j}(s_{i-1}-s_i)}}{\prod_{k=0,k\neq j}^{m_i} (\lambda_{i,k}-\lambda_{i,j})}.\label{E:lambdai0}
       \end{align}
       where the second line follows because $-\lambda_{i,j} = \lambda_{i,0}-\lambda_{i,j}$.
      }

 Case (ii):   If $\lambda_{i,0} > 0$, then we rewrite $f_i$ as,
   \begin{eqnarray}
f_i(v_{0}, v_1, \ldots, v_{m_i}) &=&
  \frac{ \prod_{j=0}^{m_i} \lambda_{i,j} e^{-   \lambda_{i,j} v_j  }}{ \prod_{j=0}^{m_i} \lambda_{i,j} }\label{E:fi2}
      \end{eqnarray}
     
     For integrating $f_i$, we use the fact that the integral of the numerator in Equation \eqref{E:fi2} is the convolution of $m_i+1$ exponential random variables with parameters $\lambda_{i,0},\ldots,\lambda_{i,m_i}$, which is the hypoexponential distribution.
      Now, since $\lambda_{i,j}<\lambda_{i,j+1}$, we observe, using the {\it probability density function} of the hypoexponential distribution,
       \begin{align*}
 \bP[\sT_{i-1,\ell_{i-1}} &| \sT_{i,\ell_{i}}, \cT  ] \\
 &= \int_v f_i(v_{0}, v_1, \ldots, v_{m_i})\; dv \\
 &=  \sum_{j=0}^{m_i}   \frac{e^{-\lambda_{i,j} (s_{i-1}-s_i)}}{ \prod_{k=0,k\neq j}^{m_i} (\lambda_{i,k}-\lambda_{i,j})},
      \end{align*}
      which is the same expression as for the $\lambda_{i,0}=0$ case \eqref{E:lambdai0}.
     \tanja{ Note that for case (i) we made use of the cumulative distribution function of the hypoexponential distribution, while for case (ii) we made use of the density  function of the hypoexponential distribution. Both cases yield the same final expression for $\bP[\sT_{i-1,\ell_{i-1}} | \sT_{i,\ell_{i}}, \cT  ] $, which establishes the proof.}
   \end{proof}
      \begin{coro} \label{CorComplexity}
  The probabilities $ \bP[\sT_{i-1,\ell_{i-1}} | \sT_{i,\ell_{i}}, \cT  ] $  for all possible $i$, $m_i$ and $\ell_i$ (recall that $m_i=\ell_i-\ell_{i-1}$) are calculated in $O(n^5)$, given all $\lambda_{i,j}$.
      \end{coro}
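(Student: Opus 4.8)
The plan is to bound the total work by separately counting (a) the number of conditional probabilities that must be evaluated and (b) the cost of evaluating each one from the closed form in Theorem \ref{Thm3}, and then multiplying the two counts. The statement already grants us the $\lambda_{i,j}$, so none of the work of producing those enters the bound.

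First I would count the probabilities. For each fixed interval index $i$, the quantity $\bP[\sT_{i-1,\ell_{i-1}} \mid \sT_{i,\ell_i}, \cT]$ is determined by the pair $(\ell_i,\ell_{i-1})$ (equivalently by $(\ell_i, m_i)$ since $m_i=\ell_i-\ell_{i-1}$), subject to the constraints $g_i\le \ell_i\le n$ and $g_{i-1}\le \ell_{i-1}\le \ell_i$. Since each of $\ell_i$ and $\ell_{i-1}$ ranges over at most $n$ integers, there are $O(n^2)$ admissible pairs for each $i$, and since $i$ ranges over $2,\ldots,n-1$ there are $O(n^3)$ such conditional probabilities in total.

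Next I would bound the cost of a single evaluation. Given the $\lambda_{i,j}$ and the interval length $s_{i-1}-s_i$, the right-hand side of \eqref{EqnGcondG} is a sum of $m_i+1\le n$ terms, and each term requires forming the denominator $\prod_{k=0,k\neq j}^{m_i}(\lambda_{i,k}-\lambda_{i,j})$, a product of $m_i$ factors, at cost $O(m_i)=O(n)$, together with one exponential evaluation. Hence a single probability costs $O(n^2)$. Combining the two counts gives $O(n^3)\cdot O(n^2)=O(n^5)$, as claimed.

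I would expect the only real care to lie in the index bookkeeping — confirming that the three indices $i$, $\ell_i$, $\ell_{i-1}$ each contribute a genuinely linear factor and that naively summing the per-$j$ products does not hide an extra factor of $n$ — rather than in any substantive mathematical difficulty. Once the ranges of $i$, $\ell_i$, and $m_i$ are pinned down, the bound is a direct product of three linear factors (from the indices) and one quadratic factor (from a single evaluation of \eqref{EqnGcondG}), with no need to optimize the evaluation further since the naive count already matches the stated $O(n^5)$.
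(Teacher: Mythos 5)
Your proposal is correct and follows essentially the same route as the paper: both arguments multiply the number of $(i,\ell_i,\ell_{i-1})$ (equivalently $(i,\ell_i,m_i)$) triples by the $O(m_i^2)$ cost of a single evaluation of Equation \eqref{EqnGcondG}. The only cosmetic difference is that you bound $m_i^2$ uniformly by $n^2$ and multiply by the $O(n^3)$ count, whereas the paper sums $m_i^2$ over the exact index ranges; both computations give $O(n^5)$.
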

\begin{proof}
For a fixed  $i$, $m_i$ and $\ell_i$, we require $O(m_i^2)$ calculations to evaluate  $ \bP[\sT_{i-1,\ell_{i-1}} | \sT_{i,\ell_{i}}, \cT]$.  We need to determine $ \bP[\sT_{i-1,\ell_{i-1}} | \sT_{i,\ell_{i}}, \cT  ] $  for all possible $i$, $m_i$ and $\ell_i$. First, we observe that  $i \leq \ell_{i-1}\leq n$, and thus for a fixed $\ell_i$, we have,
$0\leq m_i\leq \ell_i-i$.
Second, $i<\ell_i\leq n$. And third, $2 \leq i \leq n-1$.
Thus, the number of calculations needed to calculate $ \bP[\sT_{i-1,\ell_{i-1}} | \sT_{i,\ell_{i}}, \cT  ]$ for all possible $i$, $m_i$ and $\ell_i$ is,
\begin{align*}
O\left(\sum_{i=2}^{n-1} \sum_{\ell_i=i+1}^n \sum_{m_i=0}^{\ell_i-i} m_i^2\right) &= O\left(\sum_{i=2}^{n-1} \sum_{\ell_i=i+1}^n (\ell_i-i)^3\right) \\
&=  O\left(\sum_{i=2}^{n-1} (n-i)^4\right) \\
&= O\left(n^5\right).
\end{align*}
\end{proof}
    \begin{coro} \label{CorComplexity2}
  The quantities $\lambda_{i,j}$ can be calculated  for all possible $i$, $m_i$, $\ell_i$ and $j$ in $O(n^5)$, given all $k_{i,j,z}$.
        \end{coro}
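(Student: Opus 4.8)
The plan is to compute each quantity $\lambda_{i,j}$ directly from its definition $\lambda_{i,j} = \sum_{z=1}^{i} \binom{k_{i,j,z}}{2}$ and then count how many such quantities must be produced. Since every value $k_{i,j,z}$ is assumed known, each binomial coefficient $\binom{k_{i,j,z}}{2} = k_{i,j,z}(k_{i,j,z}-1)/2$ is evaluated in $O(1)$ time; during interval $\tau_i$ the species tree has exactly $i \leq n$ branches, so forming the sum over $z = 1, \ldots, i$ costs $O(i) = O(n)$ arithmetic operations for a single $\lambda_{i,j}$.

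Next I would count the tuples $(i, \ell_i, m_i, j)$ that index the required quantities, using exactly the ranges established in the proof of Corollary \ref{CorComplexity}, namely $2 \leq i \leq n-1$, $i+1 \leq \ell_i \leq n$ and $0 \leq m_i \leq \ell_i - i$, together with $0 \leq j \leq m_i$. Because $\lambda_{i,j}$ depends on $k_{i,j,z}$, which is determined by the full configuration $(i,\ell_i,m_i)$ and not only by the pair $(i,j)$, there are genuinely $m_i+1$ distinct values of $j$ to handle for each choice of $(i,\ell_i,m_i)$. Multiplying the per-value cost $O(n)$ by the number of tuples gives
$$O\!\left(\sum_{i=2}^{n-1}\sum_{\ell_i=i+1}^{n}\sum_{m_i=0}^{\ell_i-i}(m_i+1)\,n\right) = O\!\left(\sum_{i=2}^{n-1}\sum_{\ell_i=i+1}^{n}(\ell_i-i)^2\,n\right) = O\!\left(\sum_{i=2}^{n-1}(n-i)^3\,n\right) = O(n^5),$$
which is the claimed bound.

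The only delicate point is the bookkeeping of the index ranges; once these match those of Corollary \ref{CorComplexity}, the extra summation over $0 \leq j \leq m_i$ contributes at most a factor $m_i + 1$ and is absorbed cleanly. I expect no genuine obstacle here, since the estimate is dominated by the naive ``$O(n^4)$ distinct values, each costing $O(n)$'' accounting. I would remark in passing that one could instead compute the $\lambda_{i,j}$ incrementally in $j$ --- as passing from the $j$th to the $(j+1)$st coalescence alters the lineage count on a single branch $z^{\ast}$, one has $\lambda_{i,j+1} - \lambda_{i,j} = \binom{k_{i,j+1,z^{\ast}}}{2} - \binom{k_{i,j,z^{\ast}}}{2}$ --- but this refinement is not needed to reach the stated $O(n^5)$ complexity.
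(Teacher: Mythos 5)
Your proposal is correct and follows essentially the same route as the paper: each $\lambda_{i,j}$ is evaluated directly from its definition at cost proportional to the number of branches in $\tau_i$, and this is multiplied by the count of index tuples $(i,\ell_i,m_i,j)$ taken over exactly the ranges of Corollary \ref{CorComplexity}, yielding $O(n^5)$. The only cosmetic difference is that you bound the per-term cost by $O(n)$ where the paper keeps it as $O(i)$ inside the sum; both give the same final bound, and your closing remark about incremental updating is a valid but unneeded refinement.
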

\begin{proof}
For a fixed  $i$, $m_i$, $\ell_i$ and $j$, we require $O(i)$ calculations to evaluate  $\lambda_{i,j}$.  As $j=0,\ldots,m_i$, with the same arguments as in Corollary \ref{CorComplexity}, we obtain,
\begin{align*}
O\left(\sum_{i=2}^{n-1} \sum_{\ell_i=i+1}^n \sum_{m_i=0}^{\ell_i-i} \sum_{j=0}^{m_i} i\right) 
&= O\left(\sum_{i=2}^{n-1} i \sum_{\ell_i=i+1}^n \sum_{m_i=0}^{\ell_i-i} m_i \right)  \\
&=   O\left(\sum_{i=2}^{n-1} i \sum_{\ell_i=i+1}^n (l_i-1)^2 \right)  \\
&=   O\left(\sum_{i=2}^{n-1} i (n-i)^3 \right)  \\
&= O\left(n^5\right).
\end{align*}
\end{proof}

   We note that the terms  $\bP[\sT_{i-1,\ell_{i-1}} | \sT_{i,\ell_{i}}, \cT  ] $ are analogous to the functions $g_{i,j}$ defined in  \cite{tavare1984,wakeley2008}, which give the probability that $i$ lineages coalesce into $j$ within time $t$ in a single population and are used extensively in computing probabilities related to unranked gene trees \cite{DegnanAndSalter05,PamiloAndNei88,Rosenberg02,wu2011}.  In particular, if only one population, say $z^{*}$, has coalescence events, then we have
   \begin{align*}
   &\bP[\sT_{i-1,\ell_{i-1}} | \sT_{i,\ell_{i}}, \cT  ] \\
   &= \frac{g_{\ell_{i+1},\ell_i}(s_i-s_{i+1})\prod_{z \ne z^{*}} g_{k_{i,0,z},k_{i,0,z}}(s_i-s_{i+1})}{\prod_{k=1}^{\ell_{i+1}-\ell_i} \binom{\ell_{i+1} - k+1}{2}},
   \end{align*} 
   a product of $g_{i,j}$ functions with the denominator counting the number of sequences in which $m_i$ coalescences could have occurred.
   The terms $ \bP[\sT_{i-1,\ell_{i-1}} | \sT_{i,\ell_{i}}, \cT  ] $ allow for the coalescences to occur in separate populations, however, and are constrained by the ranking of the gene tree.  For example, in interval $\tau_3$ of Figure 1c, there are two coalescences which occur in different populations.  If the ranking of the gene tree were not important, the branches could be considered independent, and the probability of this event would be $g_{2,1}(s_2-s_3)g_{2,1}(s_2-s_3)$.  However, the gene tree ranking constrains the coalescence of $A$ and $B$ to be less recent than that of $D$ and $E$, so the probability for events in this interval is, 
   $$\bP [\sT_{3,2} | \sT_{4,3},\cT] = [g_{2,1}(s_2-s_3)]^2/2.$$
 \james{We illustrate that we get the same result from Theorem \ref{Thm3}: there are two coalescence events in interval $\tau_3$, so we use $j=0,1,2$, and calculate
 \begin{align*}
 &\lambda_{3,0} =   \binom{1}{2} + \binom{1}{2} + \binom{1}{2} = 0, \\
 &\lambda_{3,1} = \binom{2}{2} + \binom{1}{2} + \binom{1}{2} = 1,\\ 
 &\lambda_{3,2} = \binom{2}{2} + \binom{1}{2} + \binom{2}{2} = 2.\\
 \end{align*}
 Thus, Equation \eqref{EqnGcondG} from Theorem \ref{Thm3} evaluates to
 \begin{align*}
 &\frac{e^{-0(s_2-s_3)}}{(2-0)(1-0)} + \frac{e^{-1(s_2-s_3)}}{(0-1)(2-1)} + \frac{e^{-2(s_2-s_3)}}{(0-2)(1-2)}\\
 &=\;\;\;\; \frac{1}{2} -e^{-(s_2-s_3)} + \frac{1}{2}e^{-2(s_2-s_3)}\\
 &= \;\;\;\; \frac{1}{2}\left(1-e^{-(s_2-s_3)}\right)^2\\
 &= \;\;\;\;  [g_{2,1}(s_2-s_3)]^2/2.
 \end{align*}
 }

\begin{rem}
The probability of a gene tree topology is the sum of the probabilities of each ranked gene tree with the given topology.
A given tree topology has $(n-1)!/{\prod_{i=1}^{n-1} (c_i-1)}$ rankings, where $c_i$ is the number of descendant leaves of interior vertex $i$. A proof can be found in \cite{Steel2003}. 
For a completely balanced tree on $n=2^k$ leaves, the number of rankings grows faster than polynomial:
the numerator can be approximated by,
$$n! \approx \sqrt{2 \pi n} (n/e)^n,$$
and the denominator can be approximated by,
$$\prod_{i=1}^{n-1} (c_i-1)=\prod_{i=1}^k (2^i -1)n/2^i \approx n^k=n^{\log_2 n},$$
showing that the ratio grows faster than polynomial in $n$.
\end{rem}

\subsection{Calculation of $g_i$ and  $k_{i,j,z}$} \label{SecGK}
\subsubsection*{Calculation of $g_i$}
If $\cT$ and $\sT$ have the same ranked topology, then $g_i=i+1$.  
In general, to compute $g_i$, we let $\text{lca}(u_j)$ be the {\it least common ancestor} node on the species tree for a node $u_j$ on the ranked gene tree -- i.e., the node with the largest rank on the species tree which is ancestral to all species represented in $u_j$.   For a node $y$ on the species tree, let $\tau(y)$ be the interval immediately above $y$.  For example, in Figure 1c, $\tau(\text{lca}(u_4)) = \tau_3$ where $u_4$ is the gene tree node with rank 4 --- the node ancestral to D and E only.  We then express $g_i$ as
\begin{equation}\label{E:gi}
\james{ g_i = n - \sum_{j=i+1}^{n-1} \prod_{k=j}^{n-1} I(\tau(\text{lca}(u_k))>\tau_i)}
\end{equation}
where $\tau_j<\tau_i$ iff $j<i$, and 
where $I(\cdot)$ is an indicator function taking the value 1 if the condition holds and otherwise 0.  \james{Assuming each lca() operation is $O(1)$ \cite{harel1984,schieber1988}), preprocessing allows all lca terms to be computed in $O(n)$ time.  Similarly all needed products and the sum in Equation \eqref{E:gi} can each then be computed in $O(n)$ time.  }  Thus,
calculating $g_1,\ldots,g_{n-1}$ can be done in $O(n)$ time.

\subsubsection*{Calculation of $k_{i,j,z}$}

\begin{figure}
 \begin{center}
 \begin{picture}(600.0, 220.0)(0,0)
\put(-20,0){\includegraphics[width=.49\textwidth]{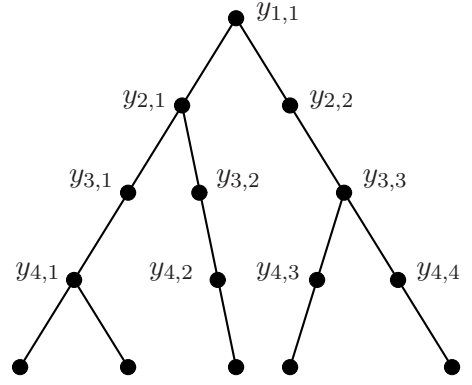}}
\put(30.00,50.00){\fontsize{11.23}{14.07}\selectfont   \makebox(72.0, 100.0)[c]{$y_{4,1}$\strut}}
\put(80.00,50.00){\fontsize{11.23}{14.07}\selectfont   \makebox(72.0, 100.0)[c]{$y_{4,2}$\strut}}
\put(120.00,50.00){\fontsize{11.23}{14.07}\selectfont   \makebox(72.0, 100.0)[c]{$y_{4,3}$\strut}}
\put(180.00,50.00){\fontsize{11.23}{14.07}\selectfont   \makebox(72.0, 100.0)[c]{$y_{4,4}$\strut}}
\put(50.00,85.00){\fontsize{11.23}{14.07}\selectfont   \makebox(72.0, 100.0)[c]{$y_{3,1}$\strut}}
\put(105.00,85.00){\fontsize{11.23}{14.07}\selectfont   \makebox(72.0, 100.0)[c]{$y_{3,2}$\strut}}
\put(160.00,85.00){\fontsize{11.23}{14.07}\selectfont   \makebox(72.0, 100.0)[c]{$y_{3,3}$\strut}}
\put(70.00,115.00){\fontsize{11.23}{14.07}\selectfont   \makebox(72.0, 100.0)[c]{$y_{2,1}$\strut}}
\put(140.00,115.00){\fontsize{11.23}{14.07}\selectfont   \makebox(72.0, 100.0)[c]{$y_{2,2}$\strut}}
\put(120.00,148.00){\fontsize{11.23}{14.07}\selectfont   \makebox(72.0, 100.0)[c]{$y_{1,1}$\strut}}
\end{picture}
\vspace{-1in}
\caption{The beaded version of the species tree topology in Figure 1a--d.}
\end{center}
\end{figure}

  We let $y_{i,j}$ be the $j$th population (read left to right) in interval $\tau_i$ (equivalently, the $j$th branch or $j$th node subtending the branch).    In order to label every population before and after a speciation time $s_i$ uniquely, extra nodes can be added to the species tree to form a {\it beaded species tree} (Figure 2), so that there are $i$ nodes at time $s_i$, $i=1, \dots, n-1$.  For each $i  \in \{1, \dots, n-1\}$, there is one node of outdegree 2, and $i-1$ nodes of outdegree 1.  Thus, population $y_{i,j}$ corresponds to a branch (equivalently, a node) in the beaded species tree.  We denote the outdegree of a node $y$ by $outdeg(y)$.

In the remainder of this section, we compute the values $k_{i,j,z}$, i.e. the number of lineages on branch $y_{i,z}$ of the beaded species tree during the interval immediately after the $j$th coalescence event (going forward in time), with $k_{i,0,z}$ being the number of lineages ``exiting" the branch at time $s_{i-1}$.  For example, in Figure 1b, we have

\begin{center}
\begin{tabular}{c c c}
$k_{2,0,1} = 1,$ & $k_{2,1,1} = 2,$ & $k_{2,2,1} = 3,$ \\
$k_{2,0,2} = 1,$ & $k_{2,1,2} = 1,$ & $k_{2,2,2} = 1\phantom{,}$\\
\end{tabular}
\end{center}

The value of $k_{i,j,z}$ depends on the number of lineages entering branch $i$, $\ell_{i}$, as well as the number of lineages exiting the branch, and not just on the number of coalescence events in the interval.  For example, in Figure 1c, $k_{2,0,1} = 1$ and $k_{2,1,1} = 2$, while in Figure 1d, $k_{2,0,1} = 2$ and $k_{2,1,1} = 3$, although the two gene trees have the same ranked topology and $m_2 = 1$ for both cases.

To determine the terms $k_{i,j,z}$ we note that the number of coalescences that have occurred more recently than interval $\tau_i$ is $n-\ell_i$. \james{In a given interval $\tau_i$, we let $z^{(1)}$ and $z^{(2)}$ be the left and right children, respectively, of population $z$ of outdegree 2, and let $z^{(1)}= z^{(2)}$ be the only child of a node $z$ of outdegree 1.}

\tanja{
The number of lineages available to coalesce in population $z$ of interval $\tau_i$ is
\begin{equation}\label{E:kijz1}
k_{i,m_i,z} = \sum_{j=1}^{outdeg(y_{i,z})} k_{i+1,0,z^{(j)}}
\end{equation}
where the $z^{(j)}$ are the daughter populations (one or two) of $z$. Further, $k_{n,0,z}=0$ for all $z$.
Since the beaded species tree has $n^2/2$ nodes, precalculating $outdeg(y_{i,z})$ requires $O(n^2)$.
For $\james{0 \le }\, j< m_i$, we have
\begin{align}\label{E:kijz2}
k_{i,j,z} &= \begin{cases}
k_{i,j+1,z}-1 & \text{$j$th coalescence on branch $z$}\\
k_{i,j+1,z} & \text{otherwise}
\end{cases}
\end{align}
Consequently, determining a particular $k_{i,j,z}$ is $O(1)$.
Thus determining $ k_{i,j,z}$ 
 for all possible $i$, $m_i$ and $\ell_i$ is (see also Corollary \ref{CorComplexity}), 
\begin{align*}
&= O\left(\sum_{i=2}^{n-1} \sum_{\ell_i=i+1}^n \sum_{m_i=0}^{\ell_i-i} \sum_{j=0}^{m_i}O(1)   \right)   \\
&= O\left(n^4\right).
\end{align*}
Note that taking the sum over all $z$ is not necessary, as in all but one branch the $k_{i,j,z}$ equals the $k_{i,j+1,z}$.
}

\subsection{An algorithm} \label{SecAlg}
\tanja{In summary, we derived an algorithm with runtime $O(n^5)$ for calculating the probability of a ranked gene tree given a species tree on $n$ tips:}
\begin{enumerate}
\item Calculate $g_1,\ldots g_{n-1}$ using Equation (\ref{E:gi}).
\item Calculate $k_{i,j,z}$ (for $i,j = 1,\ldots,n; z=1\ldots i$), using Equations (\ref{E:kijz1}) and (\ref{E:kijz2}).
\item Calculate $\lambda_{i,j} = \sum_{z=1}^i {k_{i,j,z} \choose 2}$ (for $i,j = 1,\ldots,n$).
\item Calculate $\bP[\sT_{i-1,\ell_{i-1}} | \sT_{i,\ell_{i}}, \cT  ] $ (for $i=2,\ldots,n$; $\ell_{i-1}=g_{i-1},\ldots,n$;  $\ell_{i}=g_{i},\ldots,n$), using Theorem \ref{Thm3}.
\item Calculate  $ \bP[\sT_{1,\ell_1}|\cT]$ using Theorem \ref{Thm2}.
\item Calculate $\bP[\sT \,|\,\cT]$ using Theorem \ref{pT}.
\end{enumerate}

\section{Discussion}

In this paper, we provide a polynomial-time algorithm ($O(n^5)$ where $n$ is the number of species) to calculate the probability of a ranked gene tree topology given a species tree, summarized in Section \ref{SecAlg}. We now discuss applying these results to computing probabilities of unranked gene tree topologies and to inferring ranked species trees. 

\subsection{Computing probabilities of unranked gene tree topologies}

Previous work on computing probabilities of unranked gene tree topologies used the concept of \emph{coalescent histories}, which specify the branches in the species tree in which each node of the gene tree occurs.  An unranked gene tree probability can then be computed by enumerating all coalescent histories and computing the probability of each.  The number of coalescent histories grows at least exponentially when the (unranked) gene tree matches the species tree, making this approach computationally intensive.  Coalescent histories can be enumerated either recursively (e.g., in PHYLONET \cite{than2008} or \cite{Rosenberg07:jcb}) or nonrecursively (COAL \cite{DegnanAndSalter05}). 

A much faster approach using dynamic programming similar to that used in this paper is implemented in STELLS \cite{wu2011}, which conditions on the ancestral configuration in each branch rather than the number of lineages.  Here an ancestral configuration keeps track not only of the number of lineages in a branch in the species tree, but also the particular nodes of the gene tree.  Different ancestral configurations can potentially have the same number of lineages within a population.  Enumerating ancestral configurations turns out to have exponential running time for arbitrarily shaped trees, but the number of ancestral configurations is still much smaller than the number of coalescent histories.  When computing probabilities of ranked gene tree topologies, however, the ranking specifies the sequence of coalescence events, leading to a unique ancestral configuration given the number of lineages in a time interval.  This fortuitously enables probabilities of ranked gene tree topologies to be computed in polynomial time.  

We note that although the number of rankings for a gene tree is not polynomial in the number of leaves in general, the number of rankings can be small for certain tree shapes.  For example, if the gene tree has a {\it caterpillar} shape, in which each internal node has a leaf as a descendant, then there is only one ranking, and thus computing the ranked and unranked gene tree are equivalent.  For a {\it pseudo-caterpillar}, a tree made by replacing the subtree with four leaves of a caterpillar with a balanced tree on four leaves \cite{Rosenberg07:jcb}, there are only two rankings possible, and for a {\it bicaterpillar} \cite{Rosenberg07:jcb}, for which the left subtree is a caterpillar with $n_L$ leaves and the right subtree is a caterpillar with $n-n_L$ leaves, there are $\binom{n-2}{n_L-1}$ rankings.  Thus computing unranked gene tree probabilities by summing ranked gene tree probabilities can be done in polynomial time for some tree shapes.  \james{We note that for the approach used by STELLS, some tree shapes can also be computed in polynomial time, including the cases we mentioned that have a polynomial number of rankings.    An open question is whether there are any classes of unranked gene trees which have a polynomial number of rankings but an exponential number of ancestral configurations, or vice versa.}

\subsection{Inferring species trees from ranked gene trees}
Our fast calculation of the probability of ranked gene tree topologies can be used to determine the maximum likelihood species tree from a collection of known gene trees.
Assume we have observed $N$ ranked gene trees (i.e., $N$ loci). 
Now the maximum likelihood species tree $\cT_{ML}$ (with branch lengths on internal branches) is
\begin{align*}
\cT_{ML} &= \underset{\cT}{\text{argmax }} \bP[\sT_1,\ldots,\sT_N|\cT] 
\end{align*}
\james{where  
\begin{align}\label{E:ML}
 \bP[\sT_1,\ldots,\sT_N|\cT]&= \prod_{k=1}^N\bP[\sT_k|\cT] = \prod_{i=1}^{H_n}  \bP[\sT^{(i)}|\cT]^{n_i}\;\; 
\end{align}
\james{is a multinomial likelihood.}
Here $\bP[\sT_k|\cT]$ can be determined with our polynomial-time  algorithm, we let $\sT^{(i)}$ denote the $i$th ranked topology, and $n_i$ is the number of times ranked topology $i$ is observed, with $\sum_{i=1}^{H_n} n_i = N$.}
Note in particular that the \james{ranked topology of $\cT_{ML}$ might differ from the most frequent ranked gene tree topology\cite{Degnan2012}.}

Our derivation of the ranked gene tree probability also suggests a way to infer a ranked species tree topology from ranked gene tree topologies with a similar flavor as the MDC criterion.  In MDC, for an input gene tree and candidate species tree, the number of  extra lineages (lineages which necessarily fail to coalesce due to topological differences between gene and species trees) on each edge of the species tree is counted.  \james{For MDC}, whether the edge of the species tree is long or short does \james{not} affect the deep coalescence cost.  In working with ranked gene trees, \james{however, we can} keep track of the minimum number of extra lineages within each time interval $\tau_i$.  The total number of extra lineages in this sense is 
\begin{equation}\label{E:mdc}
\sum_{i=1}^{n-1} g_i - (i+1)
\end{equation}
Minimizing \eqref{E:mdc}  as a criterion for the ranked species tree will tend to penalize long edges of the species tree which have multiple lineages persisting through multiple species divergence events.  As an example, in Figure 1b, the gene tree has a MDC cost of 1 since there are two lineages exiting the population immediately ancestral to $A$ and $B$; however the cost according \eqref{E:mdc} is 2 because there are two edges on the beaded version of the species tree (Figure 2) that each have an extra lineage.  In Figure 1c, the gene tree has a MDC cost of 0 for the species tree since it has the matching unranked topology; however, the number of extra lineages from equation \eqref{E:mdc} is 1.  We note that in Figure 1c, interval $\tau_3$, incomplete lineage sorting (and deep coalescence) have not occurred as these concepts are normally used.  To capture the idea that coalescence has nevertheless occurred in a more ancient time interval than allowed, we might refer to the coalescence of $A$ and $B$ in Figure 1c as an ``ancient lineage sorting" event (rather than incomplete lineage sorting event) or an ancient coalescence rather than a deep coalescence.  We could therefore refer to minimizing equation \eqref{E:mdc} as the Minimize Ancient Coalescence (MAC) criterion, which would provide an interesting comparison to the usual topology-based MDC criterion.

\james{In practice, a method of inferring a species tree from ranked gene trees would require estimating the ranked gene trees.  This would require clock-like gene trees, or trees with times estimated for nodes, which can also be inferred under relaxed clock models in BEAST \cite{drummond2007}.   To account for the uncertainty in the gene trees, the counts for different ranked gene trees could be weighted by their posterior probabilities obtained from Bayesian estimation of the gene trees \cite{AllmanDegnanRhodes2011}.      Thus, in equation \eqref{E:ML}, we would let  $n_{ik}$ be the posterior probability of ranked topology $i$ at locus $k$, and use $n_i = \sum_{k=1}^{H_n} n_{ik}$ as the estimated number of times that ranked topology $i$ was observed.  Similarly, for equation \eqref{E:mdc}, the coalescence cost at a locus could be distributed over multiple topologies weighted by their posterior probabilities.}


\section*{Acknowledgements}
We thank David Bryant for suggesting the dynamic programming approach to this problem and two anonymous referees for valuable comments, particularly on calculating $g_i$ and $k_{i,j,z}$.
JHD was funded by the New Zealand Marsden fund and by a Sabbatical Fellowship at the National Institute for Mathematical and Biological Synthesis, an Institute sponsored by the National Science Foundation, the U.S. Department of Homeland Security, and the U.S. Department of Agriculture through NSF Award \#EF-0832858, with additional support from The University of Tennessee, Knoxville. TS was funded by the Swiss National Science Foundation.

\begin{footnotesize}
\bibliographystyle{plain}
\bibliography{rankedbib}

\begin{thebibliography}{10}

\bibitem{AllmanDegnanRhodes2011}
E.~S. Allman, J.~H. Degnan, and J.~A. Rhodes.
\newblock Identifying the rooted species tree from the distribution of unrooted
  gene trees under the coalescent.
\newblock {\em J. Math. Biol.}, 62:833--862, 2011.

\bibitem{DegnanEtAl09}
J.~H. Degnan, M.~{DeGiorgio}, D.~Bryant, and N.~A. Rosenberg.
\newblock Properties of consensus methods for inferring species trees from gene
  trees.
\newblock {\em Syst. Biol.}, 58:35--54, 2009.

\bibitem{DegnanAndRosenberg06}
J.~H. Degnan and N.~A. Rosenberg.
\newblock Discordance of species trees with their most likely gene trees.
\newblock {\em PLoS Genet.}, 2:762--768, 2006.

\bibitem{DegnanAndRosenberg09}
J.~H. Degnan and N.~A. Rosenberg.
\newblock Gene tree discordance, phylogenetic inference, and the multispecies
  coalescent.
\newblock {\em Trends Ecol. Evol.}, 24:332--340, 2009.

\bibitem{Degnan2012}
J.~H. Degnan, N.A. Rosenberg, and T.~Stadler.
\newblock The probability distribution of ranked gene trees on a species tree.
\newblock {\em Math. Biosci.}, 235:45--55, 2012.

\bibitem{DegnanAndSalter05}
J.~H. Degnan and L.~A. Salter.
\newblock Gene tree distributions under the coalescent process.
\newblock {\em Evolution}, 59:24--37, 2005.

\bibitem{drummond2007}
A.~J. Drummond and A.~Rambaut.
\newblock Beast: Bayesian evolutionary analysis by sampling trees.
\newblock {\em BMC Evolut. Biol.}, 7:214, 2007.

\bibitem{edwards1970}
A.~W.~F. Edwards.
\newblock Estimation of the branch points of a branching diffusion process.
\newblock {\em J. R. Stat. Soc. Ser. B}, 32:155--174, 1970.

\bibitem{EwingEtAl08}
G.~B. Ewing, I.~Ebersberger, H.~A. Schmidt, and A.~{von Haeseler}.
\newblock Rooted triple consensus and anomalous gene trees.
\newblock {\em BMC Evol. Biol.}, 8:118, 2008.

\bibitem{harel1984}
D.~Harel and R.~E. Tarjan.
\newblock Fast algorithms for finding nearest common ancestors.
\newblock {\em {SIAM} J. Comput.}, 13:338--355, 1984.

\bibitem{heled2010}
J.~Heled and A.~J. Drummond.
\newblock Bayesian inference of species trees from multilocus data.
\newblock {\em Mol. Biol. Evol.}, 27:570--580, 2010.

\bibitem{huangKubatkoKnowles2010}
H.~Huang, Q.~He, L.~S. Kubatko, and L.~L. Knowles.
\newblock Sources of error for species-tree estimation: Impact of mutational
  and coalescent effects on accuracy and implications for choosing among
  different methods.
\newblock {\em Syst. Biol.}, 59:573--583, 2009.

\bibitem{kubatko2009}
L.~S. Kubatko, B.~C. Carstens, and L.~L. Knowles.
\newblock {STEM}: Species tree estimation using maximum likelihood for gene
  trees under coalescence.
\newblock {\em Bioinformatics}, 25:971--973, 2009.

\bibitem{Liu2007}
L.~Liu and D.~K. Pearl.
\newblock Species trees from gene trees: Reconstructing bayesian posterior
  distributions of a species phylogeny using estimated gene tree distributions.
\newblock {\em Syst. Biol.}, 56:504--514, 2007.

\bibitem{Liu2011}
L.~Liu and L.~Yu.
\newblock Estimating species trees from unrooted gene trees.
\newblock {\em Syst. Biol.}, 60:661--667, 2011.

\bibitem{liuReview2009}
L.~Liu, L.~Yu, L.~S. Kubatko, D.~K. Pearl, and S.~V. Edwards.
\newblock Coalescent methods for estimating phylogenetic trees.
\newblock {\em Mol. Phylogenet. Evol.}, 53:320--328, 2009.

\bibitem{LiuEtAl10:jmathbiol}
L.~Liu, L.~Yu, and D.~K. Pearl.
\newblock Maximum tree: a consistent estimator of the species tree.
\newblock {\em J. Math. Biol.}, 60:95--106, 2010.

\bibitem{LiuEtAl09:systbiol}
L.~Liu, L.~Yu, D.~K. Pearl, and S.~V. Edwards.
\newblock Estimating species phylogenies using coalescence times among
  sequences.
\newblock {\em Syst. Biol.}, 58:468--477, 2009.

\bibitem{maddison2006}
W.~P. Maddison and L.~L. Knowles.
\newblock Inferring phylogeny despite incomplete lineage sorting.
\newblock {\em Syst. Biol.}, 55:21--30, 2006.

\bibitem{MosselAndRoch10}
E.~Mossel and S.~Roch.
\newblock Incomplete lineage sorting: consistent phylogeny estimation from
  multiple loci.
\newblock {\em IEEE/ACM Trans. Comp. Biol. Bioinf.}, 7:166--171, 2010.

\bibitem{PamiloAndNei88}
P.~Pamilo and M.~Nei.
\newblock Relationships between gene trees and species trees.
\newblock {\em Mol. Biol. Evol.}, 5:568--583, 1988.

\bibitem{Rosenberg02}
N.~A. Rosenberg.
\newblock The probability of topological concordance of gene trees and species
  trees.
\newblock {\em Theor. Pop. Biol.}, 61:225--247, 2002.

\bibitem{Rosenberg07:jcb}
N.~A. Rosenberg.
\newblock Counting coalescent histories.
\newblock {\em J. Comput. Biol.}, 14:360--377, 2007.

\bibitem{Ross2007}
S.~Ross.
\newblock {\em Introduction to Probability Models}.
\newblock Academic Press, San Diego, CA, 9th edition, 2007.

\bibitem{schieber1988}
B.~Schiever and U.~Vishkin.
\newblock On finding lowest common ancestors: simplification and
  parallelization.
\newblock {\em {SIAM} J. Comput.}, 17:1253--1262, 1988.

\bibitem{Steel2003}
C.~Semple and M.~Steel.
\newblock {\em Phylogenetics}, volume~24 of {\em Oxford Lecture Series in
  Mathematics and its Applications}.
\newblock Oxford University Press, Oxford, 2003.

\bibitem{tavare1984}
S.~{Tavar{\'{e}}}.
\newblock {Line-of-descent and genealogical processes, and their applications
  in population genetics models.}
\newblock {\em Theor. Popul. Biol.}, 26:119--164, 1984.

\bibitem{than2009}
C.~Than and L.~Nakhleh.
\newblock Species tree inference by minimizing deep coalescences.
\newblock {\em {PLoS} Comput. Biol.}, 5:e1000501, 2009.

\bibitem{ThanEtAl07}
C.~Than, D.~Ruths, H.~Innan, and L.~Nakhleh.
\newblock Confounding factors in {HGT} detection: statistical error, coalescent
  effects, and multiple solutions.
\newblock {\em J. Comput. Biol.}, 14:517--535, 2007.

\bibitem{than2008}
C.~Than, D.~Ruths, and L.~Nakhleh.
\newblock Phylonet: A software package for analyzing and reconstructing
  reticulate evolutionary relationships.
\newblock {\em BMC Bioinformatics}, 9:322, 2008.

\bibitem{wakeley2008}
J.~Wakeley.
\newblock {\em Coalescent Theory}.
\newblock Roberts \& Company, Greenwood Village, CO, 2008.

\bibitem{WangDegnan2011}
Y.~Wang and J.~H. Degnan.
\newblock Performance of matrix representation with parsimony for inferring
  species from gene trees.
\newblock {\em Stat. Appl. Genet. Mol. Biol.}, 10:21, 2011.

\bibitem{wu2011}
Y.~Wu.
\newblock Coalescent-based species tree inference from gene tree topologies
  under incomplete lineage sorting by maximum likelihood.
\newblock {\em Evolution}, doi:10.1111/j.1558-5646.2011.01476.x, 2011.

\end{thebibliography}
\end{footnotesize}

\clearpage

\end{document}